\theoremstyle{plain}
\newtheorem{lemma}{Lemma}
\newtheorem{proposition}{Proposition}
\newtheorem{theorem}{Theorem}
\theoremstyle{definition}
\newtheorem{definition}{Definition}
\theoremstyle{remark}
\newtheorem{remark}{Remark}
\newcommand{\matP}{\mathcal{P}}
\newcommand{\matB}{\mathcal{B}}
\newcommand{\matE}{\mathcal{E}}
\newcommand{\matV}{\mathcal{V}}
\newcommand{\matRe}{\mathbb{R}}
\newcommand{\matI}{\mathcal{I}}
\newcommand{\si}{\Sigma}
\title {Dynamics and Stability of Meshed Multiterminal HVDC Networks}
\author{Santiago Sanchez, Alejandro Garces, Gilbert Bergna, Elisabetta Tedeschi  \thanks{S. Sanchez, G. Bergna and E. Tedeshci (santiago.sanchez, @elkraft.ntnu.no) are with the Department of Electric Power Engineering, Norwegian Universtiy of Science and Technology. Electrical Engineering Building E, 3rd Floor
O.S. Bragstads plass 2a
Gloshaugen, Trondheim, Norway}
\thanks{A. Garces (alejandro.garces@utp.edu.co)is with the Department of Electric Power Engineering, Universidad Tecnol\'ogica de Pereira. Carrera 27 10-02 Barrio Alamos - Risaralda - Colombia - AA: 97 - post code: 660003}}
\begin{document}

\maketitle
\thispagestyle{empty}
\begin{abstract}

This paper investigates the existence of an equilibrium point in multiterminal HVDC (MT-HVDC) grids, assesses its uniqueness and defines conditions to ensure its stability.  An offshore MT-HVDC system including two wind farms is selected as application test case. At first, a generalized dynamic model of the network is proposed, using hypergraph theory. Such model captures the frequency dependence of transmission lines and cables, it is non-linear due to the constant power behavior of the converter terminals using droop regulation, and presents a suitable degree of simplifications of the MMC converters, under given conditions, to allow system level studies over potentially large networks. Based on this model, the existence and uniqueness of the equilibrium point is demonstrated by returning the analysis to a load-flow problem and using the Banach fixed point theorem. Additionally, the stability of the equilibrium is analyzed by obtaining a Lyapunov function by the Krasovskii’s theorem. Computational results obtained for the selected 4 terminal MT-HVDC grid corroborate the requirement for the existence and stability of the equilibrium point. 
\end{abstract}

\begin{IEEEkeywords}
Offshore wind energy, Multiterminal HVDC systems, HVDC, transient stability, equilibrium point, dynamic systems.
\end{IEEEkeywords}		

\section{Introduction}

\subsection{Motivation and state of the art}
\IEEEPARstart{T}{he} increasing penetration of renewable energy in the traditional power system and particularly the massive integration of offshore wind farms, whose installed capacity exceeded 18 GW in 2017, will call for the future deployment of multiple terminal HVDC grids. MT-HVDC increase the efficiency of the power transfer over long distances compared to HVAC solutions and the reliability of the power supply compared to point-to-point interconnection. Moreover, the use of the Voltage Source Converter (VSC) technology, with independent control of active and reactive power at the AC terminals is considered and advantage to support weak power systems 
\cite{chinaHVDC}.\\
The deployment of large interconnected HVDC grids requires specific power system analyses, similar to those traditionally applied to AC power systems, which however need to the adapted to describe the specific components and operation of MT-HVDC grids.\\
Classic stability analysis in power systems requires three main steps: modelling, load flow calculation and dynamic analysis. In the first step, the grid is modelled in order to capture the main dynamics of the real system. Typically, an accurate yet simplified model is required to reduce the computational complexity, particularly in the case of large and complex systems. In the second step, a load flow analysis is performed in order to obtain the equilibrium point. Numerical algorithms are required in this step due to the non-linear nature of the equations. Finally, the dynamic analysis is performed around the equilibrium point. It is important to highlight that normally the existence of the equilibrium point and convergence of the numerical algorithm are taken for granted, despite the non-linear nature of the problem.
This paper addresses this aspect by identifying exact conditions for the existence and uniqueness of the solution. \\
The related problem of existence of the equilibrium in electric systems including constant power loads has been previously studied mostly in dc microgrids and distribution level grid applications  \cite{Pflowmit,koguiman,adhoc}. The equilibrium existence was ensured upon compliance with a certain inequality condition for an electric grid with constant power terminals in \cite{koguiman}. In \cite{adhoc} the stability and power flow of dc microgrids were analyzed considering a worst-case scenario. Such contributions focus on the stability for microgrids, and the same approach can be extended to the study of MT-HVDC system, where, not only constant power loads, but also constant power generators and droop controlled units are present.\\
On the other hand, linearized models are used for stability assessment for MT-HVDC has been specifically addressed in the literature (e.g. \cite{JonAre}), also including the effect of droop controllers \cite{drops}). Transient stability studies based on extensive numerical simulation have also been proposed for MT-HVDC systems  \cite{transient}, as well as power flow analyses (e. g.\cite{classico}). Despite being debated for a long time, the latter still represents an open research subject, as proved, for example, by \cite{pedro2} and \cite{multiport}.\\
However, to the authors’ knowledge, all the studies on MT-HVDC grids presented so far take the existence of an equilibrium point for granted without a formal demonstration.
This paper aims at filling this gap, by proposing a method to assess the existence of an equilibrium point in and MT-HVDC network, exploiting the analogy with the power flow convergence problem. The equilibrium is represented by a set of non-linear algebraic functions that require a successive approximations method for their solution. The Banach fixed point theorem is used in order to guarantee convergence of the power flow and uniqueness of the solution. The stability study of the equilibrium point is also presented with the evaluation of a Lyapunov function obtained by the Krasovskii's method \cite{sastry,khalil}. The paper represents a significant extension of the contributions included in \cite{ourmthvdc,MMCPItune}, including a grid representation based on graph theory, which includes the frequency dependence of transmission line/cable models. Additionally, it presents a more realistic model of the Modular Multilevel Converter together with conditions for its simplifications. And a procedure to find the Lyapunov function with the Krasovskii's method based on the solution of a linear matrix inequality problem.
The rest of the paper is organized as follows:  in section \ref{sec:dynamicmodel} the dynamical model of the MT-HVDC is presented with an accurate model of the HVDC lines.  Next, in section \ref{sec:op} the equilibrium point of the grid is studied. After that, the stability of the grid is analyzed by using the classic Lyapunov theorem and Krasovskii's  method in section \ref{sec:stability}. Finally, numerical results are presented in Section \ref{sec:computa}, followed by conclusions in Section \ref{se:conclu}.

\section{Dynamical model}\label{sec:dynamicmodel}

\subsection{Model of the Modular Multilevel Converter}

\begin{figure}
    \centering
    \includegraphics[scale=0.5]{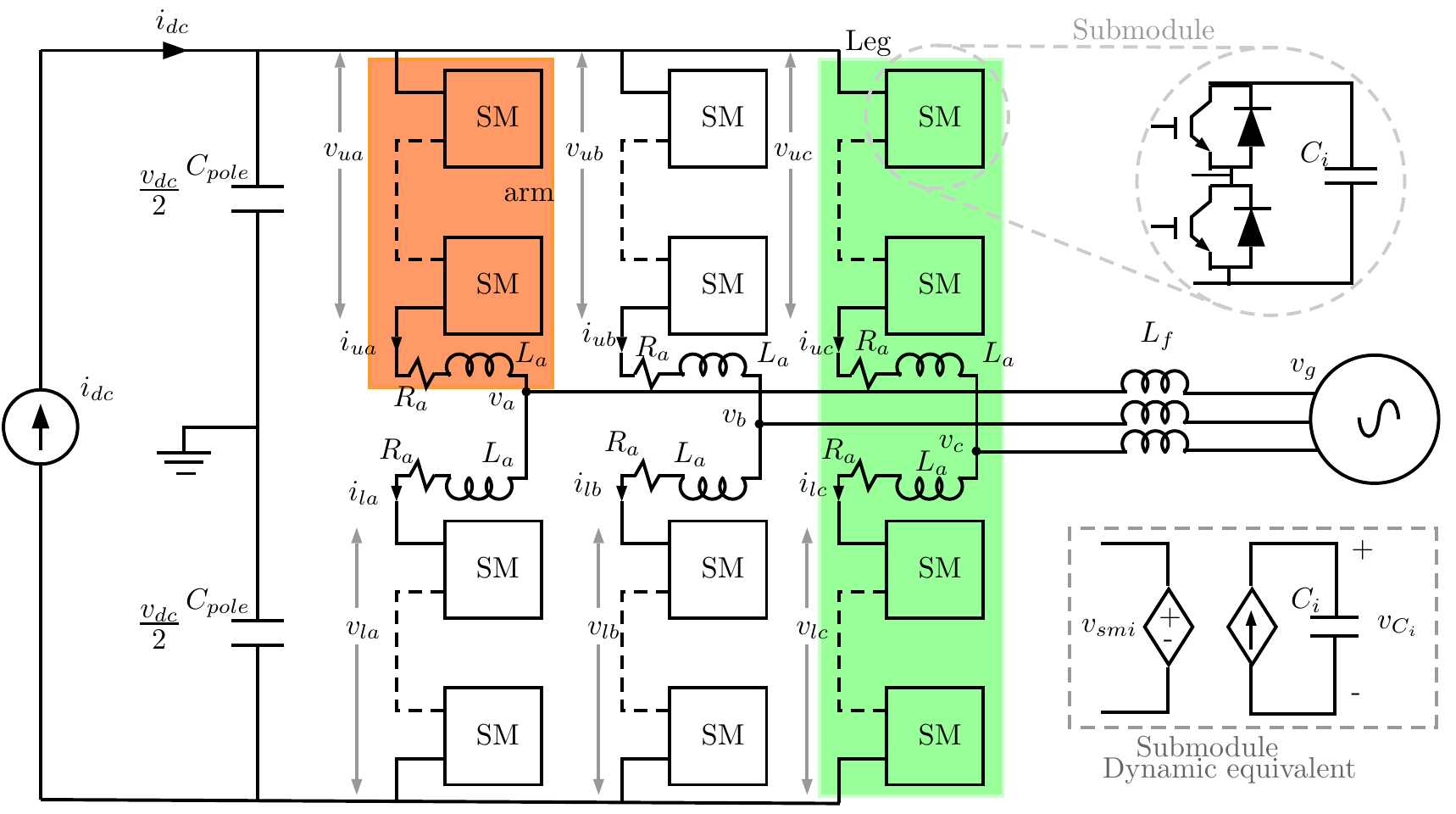}
    \caption{MMC dynamic model}
    \label{fig:mmc}
\end{figure}

The modular multilevel converter is the most promising converter technology for MT-HVDC system in offshore applications. For the purpose of power system studies, its dynamics is described by the continuous model presented in \cite{MMC1,MMC2,elimpublicable}. Initially, the $n$ submodule dynamic equivalents depicted in Fig. \ref{fig:mmc} are aggregated into an equivalent circuit per arm. The equivalent capacitor is defined as $C_{eq}=C_i/n$, where the current flowing into the equivalent capacitor is represented by the Hadamard product  defined with $\circ$:
\[C_{eq}\frac{d v_{eq}}{dt}=m_{}\circ i_{};\]
where, $v_{eq}=(v_{eq,ua},v_{eq,ub},v_{eq,uc},v_{eq,la},v_{eq,lb},v_{eq,lc})^T$, $v_{eq}\in \mathbb{R}^{6\times 1}$ is the equivalent vector of aggregated submodule capacitor voltage at each phase, $i=(i_{ua},i_{ub},i_{uc},i_{la},i_{lb},i_{lc})^T$ is the arm current per phase, the subindices are used for the $\{u,l\}$ upper and lower arms, respectively and the sub-indices $\{a,b,c\}$ represents the phases. The dynamics of the equivalent arm can be controlled by the modulation index $m=(m_{ua},m_{ub},m_{uc},m_{la},m_{lb},m_{lc})^T$. Moreover, the equivalent voltage at the arm $v_{ul}=(v_{ua},v_{ub},v_{uc},v_{la},v_{lb},v_{lc})^T$ is proportional to the equivalent capacitor voltage
\[v_{ul}=m\circ v_{eq}.\]
The model in \eqref{eq:uppv} represents the voltage of the upper arm loops, with $v_{u}=(v_{ua},v_{ub},v_{uc},)^T$, $v_{}=(v_{a},v_{b},v_{c},)^T$, $i_{u}=(i_{ua},i_{ub},i_{uc})^T$ and ${1_x}$ is an all ones vector of size ${3\times 1}$.
\begin{equation}
\label{eq:uppv}
    L_{a}\frac{di_{u}}{dt}=-v-v_{u}+1_x\frac{v_{dc}}{2}-R_{a}i_{u},
\end{equation}
where, $L_a$ and $R_a$ are the arm inductance and resistor, respectively. The dc voltage is $v_{dc}$ and the output voltage is $v$. The dynamics of the lower arm are described by the equation \eqref{eq:lowv}, with $v_{l}=(v_{la},v_{lb},v_{lc})^T$ and $i_{l}=(i_{la},i_{lb},i_{lc})^T$.
\begin{equation}
    \label{eq:lowv}
    L_{a}\frac{di_{l}}{dt}=v_{}-v_{l}+{1_x}\frac{v_{dc}}{2}-R_{a}i_{l}.
\end{equation}
with the addition and subtraction of the upper and lower currents, a new set of differential equations is obtained to represent the dynamics of the converter. This new set of differential equations facilitates the implementation of the control strategy. First, the currents are defined as:
\begin{equation}
    i_{g}=i_{u}-i_{l}
\end{equation}
where $i_{g}=(i_{ga},i_{gb},i_{gc})^T$ is the current flowing into the grid at each phase as depicted in Fig \ref{fig:mmc}. The addition of the upper and lower currents produce a circulating current (defined as $i_{\si}=(i_{\si, a},i_{\si, b},i_{\si, c})^T$).
\begin{equation}
    \label{eq:icirc}
    i_{\si}=\frac{1}{2}(i_{u}-i_{l}).
\end{equation}
Following the procedure above for the voltages, the difference of lower and upper voltages produces a voltage driving the grid current ($e=(e_{a},e_{b},e_{c})^T$), the variable in terms of upper and lower voltages is described in \eqref{eq:ex}.
\begin{equation}
    \label{eq:ex}
    e_{}=\frac{1}{2}(-v_{u}+v_{l})
\end{equation}
The voltage $u_{\si}=(u_{\si,a},u_{\si,b},u_{\si,c})^T$ that drives the circulating current is shown in \eqref{eq:usx}.
\begin{equation}
\label{eq:usx}
u_{\si}=\frac{1}{2}(v_{u}+v_l)
\end{equation}
Therefore, the differential equations that model the system per phase are \eqref{eq:ig} and \eqref{eq:is}.
\begin{eqnarray}
    \label{eq:ig}
    \frac{L_a}{2}\frac{di_{g}}{dt}&=&-\frac{R_a}{2}i_{g}+e-v
    \\
    \label{eq:is}
    L_a \frac{i_{\si}}{dt}&=&-R_a i_{\si}-u_{\si}+{1_x}\frac{v_{dc}}{2}
\end{eqnarray}
The direct and quadrature axes representation of the grid currents are obtained applying the Park transform to the set \eqref{eq:ig} (See the Park's transformation in \cite{MMC2}). The system is represented in vector form with the current grid vector $i_{gdq}^T=\left(i_{gd},i_{gq}\right)$, the voltage vector $v_{dq}^T=\left(v_{d},v_{q}\right)$, the voltage $e_{dq}^T=\left(e_d,e_q\right)$ and the electrical frequency speed is $\omega$.
\begin{eqnarray}
    \label{eq:igdq}
    L_a \frac{di_{gdq}}{dt}=-R_ai_{gdq}+L_aJ_{\omega}i_{gdq}+e_{dq}-v_{dq},
\end{eqnarray}
where, $J_{\omega}$ is a skew-symmetric matrix defined as
\[J=\left(
\begin{array}{cc}
0 &-\omega\\
\omega &0
\end{array}
\right)
\]
The circulating currents at the direct and quadrature axes can be obtained applying the Park transform to a system rotating at $-2\omega$. The vector form of the circulating current for d and q is $i_{\si dq}^T=\left(i_{\si d},i_{\si q}\right)$ and the zero sequence circulating current is $i_{\si z}$. The vector form of the circulating voltage is $u_{\si dq}^T=\left(u_{\si d},u_{\si q}\right)$ and the zero sequence component is $u_{\si z}$.

\begin{eqnarray}
    L_a\frac{di_{\si dq}}{dt}&=&-R_ai_{\si dq}+2L_aJ_{\omega}i_{\si dq}-u_{\si dq},\\
    L_a\frac{di_{\si z}}{dt}&=&-R_ai_{\si z}-u_{\si z}+\frac{v_{dc}}{2}.
\end{eqnarray}

\subsubsection{Currents control strategy}
The control of the MMC is realized by the application of PI controllers to the currents at the grid side, the application of the circulating current suppression control and the zero sequence circulating current with PI too. Without loss of generality a first order current system modeled as \eqref{eq:igen} has a general controller, which is described by \eqref{eq:erru} and \eqref{eq:ugen}.
\begin{eqnarray}
\label{eq:igen}
l \frac{di}{dt}&=&-ri+u+d\\
\label{eq:erru}
\frac{d\phi}{dt}&=&i_{ref}-i\\
\label{eq:ugen}
u&=&k_{p}(i_{ref}-i)+k_i\phi-d
\end{eqnarray}
where, $i$ is the current variable to be controlled, $\phi$ is the controller state variable, $u$ is the control variable, $d$ is the disturbance of the system, $l$ and $r$ are the general variables for the resistor and inductance, respectively. The controller proportional gain is $k_{p}$ and the integral gain is $k_{i}$.

\subsubsection{Zero sequence energy model and control}
It has been shown in \cite{MMC2} that with the appropriate control of the zero\footnote{Without loss of generality the sub-index $z$ is only used in this subsection to represent the zero sequence of electrical systems.} sequence energy of the MMC the system balances the power between the AC grid and the DC grid. Moreover, the variable responsible for the DC power side is $i_{\si z}$. Therefore, this paper uses the control of the zero sequence energy $w_{\si z}$ model.
\begin{equation}
    \label{eq:wz}
    \frac{dw_{\si z}}{dt}\approx 2 u_{\si z} i_{\si z}-\frac{1}{2}(e_{d}i_{gd}+e_q i_{gq})
\end{equation}
The controller calculates the reference current $i_{ref,\si z}$ and is described in \eqref{eq:wu} and \eqref{eq:werr}.
\begin{align}
\label{eq:wu}
i_{ref,\si z}&=\frac{1}{2u_{\si z}}\left(k_{pw}(w_{ref,\si z}-w_{\si z})+k_{iw}\phi_w+P_{d}\right)
\\
\frac{d\phi_{w}}{dt}&=w_{ref,\si z}-w_{\si z}
\label{eq:werr}
\end{align}
where, $w_{ref,\si z}$ is the reference zero sequence energy, $\phi_w$ is the state of the corresponding controller, $P_{d}$ is defined as the disturbance of the system \eqref{eq:wz} and it is $P_{d}=(e_{d}i_{gd}+e_q i_{gq})/2$. The controller gains are the proportional and the integral $k_{wp}$ and $k_{wi}$, respectively.

\textcolor{black}{
\subsection{Conditions for the use of an approximated model for the MMC}
It is a common procedure in power systems analysis to approximate the fast transient behaviour of the power electronics converter to an active power injection model as presented in \cite{milanovic,KOTB201679,lewisugrid,HVDC_EMTP_stab}. This approximation is based on the reasonable assumption that the converters are tightly-regulated, the harmonic distortion is negligible, dc faults are outside the scope of the analysis, the transient response is typically in the range of few milliseconds, losses can be neglected and a generic model can be used to represent the dynamics. The control strategy presented above allows the MMC to act as an active power injection system. It is important to notice that the internal MMC energy is kept constant to control it as an active power source. \\
Therefore, under the assumption that there is perfect regulation of the MMC, the converter model for the network is reduced to \eqref{eq:pinpout}.  
\begin{equation}
    \label{eq:pinpout}
    \tau \frac{di_{pj}}{dt}=-i_{pj}+i_{ref}
\end{equation}
where, $i_{pj}$ is the DC current at the power injection nodes, $\tau$ is the time constant that approximates the current controllers effect, $i_{ref}$ is the input reference of the converter. 
}

\subsection{Model of the grid}

Let us consider an MT-HVDC grid represented by the terminals $\mathcal{N} = \left\{1,2,...,N \right\}$.  Each terminal has a converter with either constant voltage control or constant power with voltage droop control; this is represented by non-empty disjoint sets $\mathcal{V}$ and $\mathcal{P}$ such that $\mathcal{N}=\left\{ \mathcal{V},\mathcal{P}\right\}$. Each transmission cable is characterized by the model depicted in Fig \ref{fig:modelo_linea} which according to \cite{CableJS}, is an accurate frequency dependent model of an HVDC cable.  The model could have several parallel RL elements (The figure depicts a model with only 3 RL branches) and the parallel elements for the capacitance and conductance $C_s$ and $G_{ci}$, respectively.

\begin{figure}
\centering
\footnotesize
\ctikzset{bipoles/length=0.8cm}
\begin{circuitikz}[scale=0.4]
  \draw (0,6) -- (-3,6);
  \draw (8,6) -- (11,6);
  \draw[very thick] (-3,5.5) -- (-3,6.5) node[above] {$n_1$};
  \draw[very thick] (11,5.5) -- (11,6.5) node[above] {$n_2$};
  \draw (0,0) node[ground] {}  to[C,l_=$C_s$] (0,4) -- (2,4) to [L=$L_3$] (4,4) to [R=$R_3$] (6,4) -- (8,4) to [C,l_=$C_s$] (8,0) node[ground] {};
  \draw (-1,0) node[ground]{} to[R=$G_{ci}$](-1,4)-- (0,4);
  \draw (9,0) node[ground]{} to[R,l_=$G_{ci}$](9,4)-- (8,4);
	\draw (0,4) |- (2,6) to [L=$L_2$] (4,6) to [R=$R_2$] (6,6) -| (8,4) ;
	\draw (0,4) |- (2,8) to [L=$L_1$] (4,8) to [R=$R_1$] (6,8) -| (8,4) ;
 \end{circuitikz}
\caption{Improved approximation model with a multi-branch $\pi$ model of and HVDC cable connecting nodes $n1$ and $n2$.}
\label{fig:modelo_linea}
\end{figure}
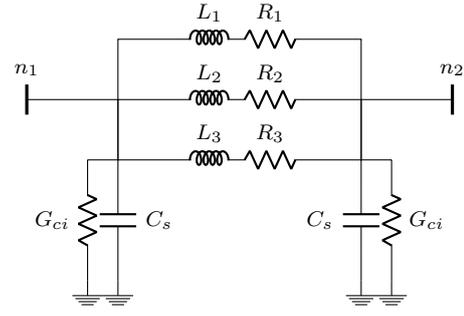

Let us consider $\mathcal{B}$ as the set of the RL sub-branches of each HVDC line.  Therefore, the grid can be represented by a uniform hypergraph $\mathcal{HG} = \left\{\mathcal{N},\mathcal{E}\right\}$. Where $\mathcal{E} \subseteq (\mathcal{N}\times\mathcal{N})\times\mathcal{B}$ is the set of hyper-edges and each one has $\matB$ branches as depicted in Figure \ref{fig:hypergraph}

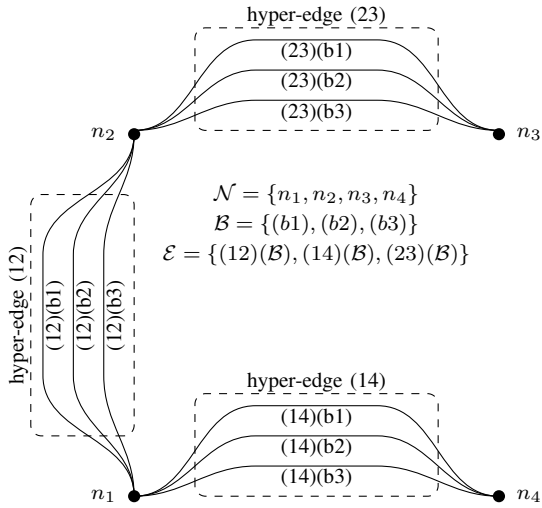
\begin{figure}
\footnotesize
\centering
\begin{tikzpicture}[x=0.8mm,y=0.8mm]
\fill (0,0) circle (1);
\fill (0,60) circle (1);
\fill (60,0) circle (1);
\fill (60,60) circle (1);
\node at (-5,0) {$n_1$};
\node at (-5,60) {$n_2$};
\node at (65,0) {$n_4$};
\node at (65,60) {$n_3$};

\draw (0,0) to[out=0, in=180] +(20,5) -- +(20,0) to[out=0, in=180] +(40,-5);
\draw (0,0) to[out=0, in=180] +(20,10) -- +(20,0) to[out=0, in=180] +(40,-10);
\draw (0,0) to[out=0, in=180] +(20,15) -- +(20,0) to[out=0, in=180] +(40,-15);
\node at (30,3) {(14)(b3)};
\node at (30,8) {(14)(b2)};
\node at (30,13) {(14)(b1)};
\draw[dashed,rounded corners] (10,0) rectangle +(40,17);
\node at (30,19) {hyper-edge (14) };

\begin{scope}[rotate=90]
\draw (0,0) to[out=0, in=180] +(20,5) -- +(20,0) to[out=0, in=180] +(40,-5);
\draw (0,0) to[out=0, in=180] +(20,10) -- +(20,0) to[out=0, in=180] +(40,-10);
\draw (0,0) to[out=0, in=180] +(20,15) -- +(20,0) to[out=0, in=180] +(40,-15);
\node at (30,3)[ rotate=90] {(12)(b3)};
\node at (30,8) [ rotate=90]{(12)(b2)};
\node at (30,13) [ rotate=90]{(12)(b1)};
\draw[dashed,rounded corners] (10,0) rectangle +(40,17);
\node at (30,19) [rotate=90]{hyper-edge (12) };
\end{scope}

\begin{scope}[yshift=138]
\draw (0,0) to[out=0, in=180] +(20,5) -- +(20,0) to[out=0, in=180] +(40,-5);
\draw (0,0) to[out=0, in=180] +(20,10) -- +(20,0) to[out=0, in=180] +(40,-10);
\draw (0,0) to[out=0, in=180] +(20,15) -- +(20,0) to[out=0, in=180] +(40,-15);
\node at (30,3)[ ] {(23)(b3)};
\node at (30,8) []{(23)(b2)};
\node at (30,13) []{(23)(b1)};
\draw[dashed,rounded corners] (10,0) rectangle +(40,17);
\node at (30,19) []{hyper-edge (23) };
\end{scope}

\node at (30,50) {$\mathcal{N} = \left\{ n_1,n_2,n_3,n_4\right\}$};
\node at (30,45) {$\mathcal{B} = \left\{ (b1),(b2),(b3)\right\}$};
\node at (30,40) {$\mathcal{E} = \left\{ (12)(\mathcal{B}),(14)(\mathcal{B}),(23)(\mathcal{B})\right\}$};

\end{tikzpicture}
\caption{Example of a uniform hypergraph wich represents an MT-HVDC with four nodes and three HVDC lines}
\label{fig:hypergraph}
\end{figure}

Let us define the branch-to-node oriented incidence matrix $A = [A_{\matV},A_{\matP}] \in \mathbb{R}^{\matE\times\mathcal{N}}$ as a matrix with $a_{ij} = 1$ if there is an hyper-edge connecting the nodes $i$ and $j$ in the direction $ij$, $a_{ij} = -1$ if there is an hyper-edge connecting the nodes $i$ and $j$ in the direction $ji$, and $a_{ij}=0$ if there is not hyper-edge between $i$ and $j$.  The current in each $RL$ sub-branch requires a triple-sub index which represents the sending node, the receiving node and the sub-branch itself.

Define $1_{\matB}$ is an all-ones vector of size $\matB \times 1$.  Therefore, the currents and voltages are given by

\begin{eqnarray}
V_{\matE}& =& ( 1_\matB\otimes A_{\matV}) \cdot V_{\matV} + (1_\matB\otimes A_{\matP}) \cdot V_{\matP} \\
  I_{\matV} &=& ( 1_\matB\otimes A_{\matV})^{T} \cdot I_{\matE} \\
  I_{\matP} &= &(1_\matB\otimes A_{\matP})^{T} \cdot I_{\matE} 
\end{eqnarray}

where $\otimes$ represents the Kronecker product,  $V_{\matV}\in\matRe^{\matV}$ is the voltage in the voltage controlled terminals, a value given by the tertiary control, $A_{\matP} \in \mathbb{R}^{\matE\times\matP}$ , $A_{\matV} \in \mathbb{R}^{\mathcal{E}\times\mathcal{V}}$ and $I_{\matE}\in\mathbb{R}^{\matE\matB\times  1}$ are the current in each HVDC line, it is defined as $I_{\matE}=(I_{\matE 1}^T,...,I_{\matE \matB}^T)^T$; the sub-indices represent the sub-branch of the hyper-edges.

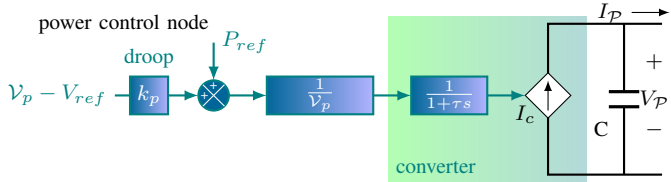
\begin{figure}[tb]
\footnotesize
\centering
\begin{tikzpicture}[x=1mm, y = 1mm]

\begin{scope} [yshift = -70]
\fill[ left color=green!30, right color = green!50!blue!30, thick] (27,-4) rectangle +(26,22);
    \node[right,green!50!blue] at (27,-2) {converter};
\node[right] at (-20,17) {power control node};
	\draw[green!50!blue, left color = blue!60!green, right color = blue!30, thick] (11,5) rectangle +(14,5);
	\node[white] at (18,7.5) {$\frac{1}{\mathcal{V}_{p}}$};
	\draw[green!50!blue, fill = blue!60!green, thick] (4,7.5) circle (2);
	\draw[white] (4,7.5) -- +(45:2);
	\draw[white] (4,7.5) -- +(-45:2);
	\draw[white] (4,7.5) -- +(135:2);
	\draw[white] (4,7.5) -- +(-135:2);
	\node[white] at (4,6.3) {\tiny{\ }};
	\node[white] at (2.8,7.5) {\tiny{+}};
	\node[white] at (4,8.8) {\tiny{+}};
	\draw[green!50!blue, thick,-latex] (-9,7.5) node[left] {$\mathcal{V}_{p}-V_{ref}$} -- +(11,0);
	\draw[green!50!blue, left color = blue!60!green, right color = blue!30, thick] (-7,5) rectangle +(5,5);
	\node[align = center, white] at (-4.5,7.5) {$k_p$};
	\node[green!50!blue] at (-4.5,12.5) {droop};
	
	\draw[green!50!blue, thick,-latex] (4,14.5) node[right] {$P_{ref}$} -- +(0,-5);
	\draw[green!50!blue, thick,-latex] (6,7.5)  -- +(5,0);
	\draw[green!50!blue, thick,-latex] (25,7.5)  -- +(5,0);
	
	\draw[green!50!blue, left color = blue!60!green, right color = blue!30, thick] (30,5) rectangle +(10,5);
	\node[align = center, white] at (35,7.5) {$\frac{1}{1+\tau s}$};
	\draw[green!50!blue, thick,-latex] (40,7.5)  -- +(5,0);
	
	\draw[-, thick] (63,-3) -| + (-15,20) -- +(0,20);
	\draw[-, thick] (58,-3) -- +(0,9);
	\draw[-, thick] (58,17) -- +(0,-9);
	\draw[-, very thick] (56,6) -- +(4,0);
	\draw[-, very thick] (56,8) -- +(4,0);
  \node at (55,3) {C};
	\node at (62,12) {$+$};
	\node at (62,7) {$V_{\matP}$};
	\node at (62,3) {$-$};
	\node at (45,5) {$I_{c}$};
	\draw[-,fill=white] (45,7.5) -- +(3,-3) -- +(6,0) -- +(3,3)  -- cycle;
	\draw[-latex] (48,5.5) -- +(0,4);
	\draw[-latex] (59,18.5) node[left] {$I_{\matP}$} -- +(5,0);
\end{scope}

\end{tikzpicture} 
\caption{Converter model based upon active power injection for a generic MT-HVDC, power control with droop.}
\label{fig:componentes}
\end{figure}
Assuming the converter can be modeled as the active power injection model with a first order system \eqref{eq:pinpout} as shown in Fig. \ref{fig:componentes}, it is possible to write the complete network as:
\begin{align}\label{eq:ie2}
L\frac{d}{dt}I_{\matE}& = -R I_{\matE}+(1_{\matB}\otimes A_{\matV}) V_{\matV} + ( 1_{\matB}\otimes A_{\matP}) V_{\matP},\\
\label{eq:ic2}
\tau \frac{d}{dt}I_{c}&=-I_c+H(V_{\matP}),\\
\label{eq:vp2}
C \frac{d}{dt} V_{\matP} &= I_c - ( 1_{\matB}\otimes A_{\matP})^{T} I_{\matE} {-G_c V_{\matP}},
\end{align}

where, $I_c\in \mathbb{R}^{\mathcal{P}}$ is the converter current vector and $\tau \in \mathbb{R}^{\mathcal{P}\times \mathcal{P}}$ is a diagonal matrix that represents the time constant approximation of the converter current response. $C\in \matRe^{\matP\times\matP}$ is the matrix of the parallel capacitors of the converters and the cable model, $G_c\in \matRe^{\matP\times\matP}$ is a diagonal matrix that contains the shunt conductance of the cable model. 
$L\in \matRe^{\matE\matB\times \matE\matB}$ is the matrix of the inductive parameters of the hyper-edge, it contains the sub-matrices of inductive values per branch, described as:
\begin{equation}
    L=\left(
    \begin{array}{ccc}
         L_{1}&  &\\
         & \ddots &\\
         &&L_{\matB}
    \end{array}
    \right)
\end{equation}
similarly, $R\in \matRe^{\matE\matB\times \matE\matB} $  is the matrix of branch resistances:
\begin{equation}
    R=\left(
    \begin{array}{ccc}
         R_{1}&  &\\
         & \ddots &\\
         &&R_{\matB}
    \end{array}
    \right)
\end{equation}
both, $L$ and $R$ are diagonal and positive definite. Furthermore, $H(V_{\mathcal{P}})$ is a vector function that represents the current on the power terminals as function of its voltage  $V_{\mathcal{P}}$ and the droop control $K_\mathcal{P}$, \textit{i.e.}, 
\begin{equation}
	H(V_{\mathcal{P}}) = diag(V_{\mathcal{P}})^{-1} \cdot \left( S + K_\mathcal{P}\cdot V_{\mathcal{P}}\right),
\end{equation}
\noindent with
\begin{equation}
	S = P_{ref} - K_\mathcal{P}V_{ref}
\end{equation}
\noindent and $P_{ref}$ and $V_{ref}$ the power and voltage references given by the tertiary control. It is assumed that all values are given in per unit.

Let us define the state variables $x = (I^T_{\mathcal{E}},I^T_{c},V^T_{\mathcal{P}})$ then, the non-linear dynamic system that represents the MT-HVDC grid takes the following form

\begin{equation}
    M \dot{x} = \Phi x + h_0(x)+ E_0
    \label{eq:modelo_enx}
\end{equation}

with

\begin{equation*}
M=\left(\begin{array}{ccc}
L_{_{\matB\matE \times \matB\matE}}&0_{_{\matB\matE \times \matP}} &0_{_{\matB\matE \times \matP}}\\
0_{_{\matP \times \matB\matE}}&\tau_{_{\matP \times \matP}} &0_{_{\matP \times \matP}}\\
0_{_{\matP \times \matB\matE}}& 0_{_{\matP \times \matP}}& C_{_{\matP \times \matP}}
\end{array}
\right),
\end{equation*}

\begin{equation*}
\Phi=\left( \begin{array}{ccc}
-R_{_{\matB\matE \times \matB\matE}} &0_{_{\matB\matE \times \matP}}&(1_{\matB}\otimes A_{\matP})_{ _{\matB\matE \times \matP}}\\
0_{_{\matP \times \matB\matE}}& -\matI_{_{\matP \times \matP}}&0_{_{\matP \times \matP}}\\
-(1_{\matB}\otimes A_{\matP})^T_{_{\matP \times \matB\matE}}&\matI_{_{\matP \times \matP}}& -G_{c_{\matP \times \matP}}
\end{array}
\right),
\end{equation*}

\begin{equation*}
h_0(x)=\left(
\begin{array}{c}
0 \\
H(V_{\matP})\\
0
\end{array}
\right),\ \ \ 
E_0=\left(
\begin{array}{c}
(1_{\matB}\otimes A_{\matV} )V_{\matV} \\
0\\
0
\end{array}
\right)
\end{equation*} 
where, $\matI_{\matP\times \matP}\in \matRe^{\matP\times \matP}$ is the identity matrix.

\section{Operating point}\label{sec:op}

Before analyzing the stability of the non-linear system (\ref{eq:modelo_enx}), it is important to establish conditions for the existence of an equilibrium point $x_0$.  

\begin{lemma}[operating point]\label{lemma:1}
An MTDC-HVDC network represented as (\ref{eq:modelo_enx}), admits an equilibrium point $x_0$ with the following representation:
\begin{eqnarray*}
-R\cdot I_{\mathcal{E}} +(1_{\matB}\otimes A_{\mathcal{V}})\cdot V_{\mathcal{V}} + ( 1_{\matB}\otimes A_{\mathcal{P}} )\cdot V_{\mathcal{P}} = 0 \\
-I_c + H(V_{\mathcal{P}}) = 0 \\
H(V_{\mathcal{P}}) - (  1_{\matB}\otimes A_{\mathcal{P}})^{T}\cdot I_{\mathcal{E}} -G_c\cdot V_{\matP}= 0
\end{eqnarray*}
\end{lemma}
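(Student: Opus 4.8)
The plan is to read off the equilibrium equations directly from the dynamic model \eqref{eq:modelo_enx} by imposing $\dot{x}=0$. Since $M$ is block–diagonal with blocks $L$, $\tau$, $C$, each of which is positive definite and therefore invertible, the condition $M\dot{x}=0$ is equivalent to $\dot{x}=0$, which in turn is equivalent to the algebraic system $\Phi x_0 + h_0(x_0) + E_0 = 0$. So the whole content of the lemma is to expand this identity blockwise.

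Concretely, I would partition $x_0 = (I_{\matE}^T, I_c^T, V_{\matP}^T)^T$ and substitute the explicit forms of $\Phi$, $h_0$ and $E_0$. The first block row then gives $-R\,I_{\matE} + (1_{\matB}\otimes A_{\matV})V_{\matV} + (1_{\matB}\otimes A_{\matP})V_{\matP} = 0$; the second block row gives $-I_c + H(V_{\matP}) = 0$; and the third block row gives $-(1_{\matB}\otimes A_{\matP})^T I_{\matE} + I_c - G_c V_{\matP} = 0$. Using the second equation to replace $I_c$ by $H(V_{\matP})$ in the third equation produces exactly the three relations stated in the lemma. Equivalently, one can simply take \eqref{eq:ie2}–\eqref{eq:vp2} and set the left-hand time derivatives to zero, which is the most transparent way to present it.

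Because this lemma only fixes the \emph{form} of the equilibrium — its existence and uniqueness being established afterwards through the load-flow reformulation and the Banach fixed-point theorem — there is essentially no obstacle here. The one subtlety worth flagging explicitly is that $H$ contains $\mathrm{diag}(V_{\matP})^{-1}$, so the representation is well posed precisely on the set where every component of $V_{\matP}$ is nonzero; this is automatically met in the per-unit operating regime of interest, where the droop-controlled terminal voltages remain close to unity. I would note this restriction and then conclude.
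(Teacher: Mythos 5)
Your proposal is correct and follows the same route as the paper, which simply sets $\dot{x}=0$ in \eqref{eq:modelo_enx} and simplifies; your blockwise expansion, the substitution of $I_c = H(V_{\matP})$ into the third equation, and the remark on $\mathrm{diag}(V_{\matP})^{-1}$ requiring nonzero voltages just make explicit what the paper leaves implicit. No gaps.
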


\begin{proof}
since it is the equilibrium point, make zero the derivative of $x$ in (\ref{eq:modelo_enx}) and simplify the resulting equations.
\end{proof}

\begin{remark} Notice that the operating point does not depend on the inductances or capacitances of the grid, then, finding the equilibrium point entails the same problem as the power flow.
\end{remark}

\begin{lemma}
The equilibrium point can be completely characterized by a given value of $V_\mathcal{P}$ 
\end{lemma}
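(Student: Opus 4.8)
The plan is to solve the three equilibrium relations of Lemma~\ref{lemma:1} in cascade, eliminating $I_{\matE}$ and $I_c$ in favour of $V_{\matP}$, so that the whole equilibrium state is recovered from $V_{\matP}$ alone.

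First I would use the first equilibrium equation. The matrix $R$ is diagonal with strictly positive entries, hence invertible, so that equation gives at once
\[
I_{\matE} \;=\; R^{-1}\Big[(1_{\matB}\otimes A_{\matV})\,V_{\matV} \;+\; (1_{\matB}\otimes A_{\matP})\,V_{\matP}\Big].
\]
Since $V_{\matV}$ is a fixed value imposed by the tertiary control, this is an explicit affine map $V_{\matP}\mapsto I_{\matE}$, so once $V_{\matP}$ is fixed, $I_{\matE}$ is uniquely determined. Next, the second equilibrium equation reads simply $I_c = H(V_{\matP})$, so $I_c$ is likewise a function of $V_{\matP}$ only, and it remains to deal with the third equation.

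Substituting the expression for $I_{\matE}$ into the third equilibrium equation yields
\[
H(V_{\matP}) \;-\; (1_{\matB}\otimes A_{\matP})^{T} R^{-1}\Big[(1_{\matB}\otimes A_{\matV})\,V_{\matV} + (1_{\matB}\otimes A_{\matP})\,V_{\matP}\Big] \;-\; G_c\,V_{\matP} \;=\; 0 ,
\]
a nonlinear algebraic system whose only unknown is the vector $V_{\matP}$ (recall $H$ depends on $V_{\matP}$ alone). Consequently every admissible equilibrium $x_0=(I_{\matE}^T,I_c^T,V_{\matP}^T)^T$ is in bijection with a solution $V_{\matP}$ of this reduced system, the remaining two blocks being reconstructed from the two formulas above; this is precisely the ``load-flow'' reformulation mentioned in the Remark, and it is the form on which the fixed-point argument that follows will act.

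The only steps needing a moment's attention are the invertibility of $R$ and the matching of the Kronecker-product block dimensions; both are immediate, $R$ being diagonal and positive definite by construction and the block sizes being those declared in~\eqref{eq:modelo_enx}. No real difficulty is expected — the argument is essentially elimination by substitution, which also reduces the dimension of the problem from that of $x$ to that of $V_{\matP}$.
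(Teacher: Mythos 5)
Your proposal is correct and follows essentially the same route as the paper: eliminate $I_{\matE}$ via the invertibility of $R$, note $I_c=H(V_{\matP})$, and reconstruct $x_0=(I_{\matE}(V_{\matP}),I_c(V_{\matP}),V_{\matP})$, with the substitution into the third equation giving the reduced load-flow equation that the paper states immediately after the lemma. Nothing is missing; your justification of $R$'s invertibility (diagonal, positive definite) is if anything slightly more direct than the paper's appeal to graph connectivity.
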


\begin{proof}
First, notice that $I_{c}=H(V_\mathcal{P})$ where $H$ is continuous for all $V_\mathcal{P}\neq 0$. Moreover, $R$ is non singular if the graph is connected, and hence, $I_\mathcal{E}$ can be calculated as
\begin{equation*}
    I_\mathcal{E} = R^{-1} (  1_{\mathcal{B}}\otimes A_\mathcal{P}V_\mathcal{P} +  1_{\mathcal{B}}\otimes A_\mathcal{V}V_{\mathcal{V}})
\end{equation*}
therefore we can obtain the equilibrium point from a point $V_\mathcal{P}$ as $x_0=(I_{\mathcal{E}}(V_\mathcal{P}),I_{c}(V_\mathcal{P}),V_\mathcal{P})^T$
\end{proof}

As consequence of this lemma, the equilibrium point can be completely defined by a $V_\mathcal{P}$ that fulfills the following:
\begin{equation*}
\begin{split}
  H(V_\mathcal{P})-
  ( 1_\mathcal{B}^T\otimes A_\mathcal{P}^T )R^{-1}(  1_\mathcal{B}\otimes A_\mathcal{P})V_\mathcal{P} \\ -
  (1_\mathcal{B}^T\otimes A_\mathcal{P}^T )R^{-1}(  1_\mathcal{B}\otimes A_\mathcal{V})V_{\mathcal{V}} - G_\mathcal{P} V_\mathcal{P} = 0 
\end{split}  
\end{equation*}
we can simplify this equation as follows
\begin{equation}
    H(V_{\mathcal{P}}) - \Phi_\mathcal{P} V_{\mathcal{P}} - E_{\mathcal{P}} = 0
\label{eq:equlibrio_enV}
\end{equation}
with
\begin{align*}
    \Phi_\mathcal{P} &= ( 1_\mathcal{B}^T\otimes A_\mathcal{P}^T )R^{-1}( 1_\mathcal{B}\otimes A_\mathcal{P} )V_\mathcal{P} + G_{\mathcal{P}} \\
    E_{\mathcal{P}} &= (1_\mathcal{B}^T\otimes A_\mathcal{P}^T  )R^{-1}(1_\mathcal{B}\otimes A_\mathcal{V} )V_{\mathcal{V}}
\end{align*}
Equation (\ref{eq:equlibrio_enV}) is a non-linear algebraic system which allows multiple roots.  However, only the roots inside a ball $B_0=\left\{V_\mathcal{P}: |v_k-1|\leq \delta \right\}$ have a physical meaning, where $v_k$ can deviate from the nominal voltage. Now, we define conditions for the existence of this root:
\begin{proposition} \label{prop:uno}
Equation (\ref{eq:equlibrio_enV}) admits a root $V_\mathcal{P}$ if there exist a $\delta>0$ and an $\alpha<1$ such that
\begin{equation*}
\alpha = \frac{\left\| \Phi_\matP^{-1}\right\|\left\| S\right\|}{(1-\delta)^2} 
\end{equation*}
this root is unique inside the ball $B_0=\left\{V_\mathcal{P}: |v_k-1|\leq \delta \right\}$ and can be obtained by the successive approximation method.
\end{proposition}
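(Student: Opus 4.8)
The plan is to recast \eqref{eq:equlibrio_enV} as a fixed-point equation $V_{\matP}=T(V_{\matP})$ and apply the Banach fixed-point theorem on the closed ball $B_0$, exactly as announced in the introduction. I first record that $\Phi_{\matP}$ (read as the constant matrix $(1_{\matB}^T\otimes A_{\matP}^T)R^{-1}(1_{\matB}\otimes A_{\matP})+G_{\matP}$) is invertible: $R^{-1}\succ0$ makes the first summand positive semidefinite, and for a connected grid with $\matV\neq\varnothing$ the matrix $1_{\matB}\otimes A_{\matP}$ has full column rank, so $\Phi_{\matP}\succ0$ and in particular $\Phi_{\matP}^{-1}$ exists. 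Then \eqref{eq:equlibrio_enV} is equivalent to $V_{\matP}=T(V_{\matP})$ with
\[
T(V_{\matP})=\Phi_{\matP}^{-1}\bigl(H(V_{\matP})-E_{\matP}\bigr),
\]
and a root of \eqref{eq:equlibrio_enV} inside $B_0$ is precisely a fixed point of $T$ in $B_0$. Since $B_0$ is closed in $\matRe^{\matP}$, it is a complete metric space, so it suffices to show that (i) $T$ is a contraction on $B_0$ and (ii) $T$ maps $B_0$ into itself.

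For (i) I would differentiate $H$. Writing $H(V_{\matP})=diag(V_{\matP})^{-1}(S+K_{\matP}V_{\matP})$ with $K_{\matP}$ diagonal, every term of the Jacobian cancels except the one coming from $S$, leaving $DH(V_{\matP})=-\,diag(V_{\matP})^{-2}\,diag(S)$. On $B_0$ each component obeys $v_k\geq1-\delta>0$, hence $\|DH(V_{\matP})\|\leq\|S\|/(1-\delta)^2$, and therefore $\|DT(V_{\matP})\|=\|\Phi_{\matP}^{-1}DH(V_{\matP})\|\leq\|\Phi_{\matP}^{-1}\|\,\|S\|/(1-\delta)^2=\alpha$, with $\|\cdot\|$ the $\ell_\infty$ vector norm and its induced operator norm, natural to the shape of $B_0$. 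Because $B_0$ is convex, the mean-value inequality gives $\|T(V_1)-T(V_2)\|\leq\alpha\|V_1-V_2\|$ for all $V_1,V_2\in B_0$; with $\alpha<1$ this is the contraction property, and it is simultaneously the estimate that makes the successive-approximation sequence $V^{(k+1)}=T(V^{(k)})$ Cauchy with geometric rate $\alpha$.

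The step I expect to be \emph{the main obstacle} is (ii): the hypothesis $\alpha<1$ alone bounds only the Lipschitz constant and does not by itself guarantee invariance of $B_0$. For $V_{\matP}\in B_0$ I would split $\|T(V_{\matP})-\mathbf 1\|\leq\|T(V_{\matP})-T(\mathbf 1)\|+\|T(\mathbf 1)-\mathbf 1\|\leq\alpha\delta+\|T(\mathbf 1)-\mathbf 1\|$, where $\mathbf 1$ is the flat voltage profile at the centre of $B_0$; hence $B_0$ is invariant as soon as the residual of \eqref{eq:equlibrio_enV} at $\mathbf 1$ satisfies $\|T(\mathbf 1)-\mathbf 1\|\leq(1-\alpha)\delta$. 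This is where the operating assumptions enter: when the tertiary-control references $P_{ref}$ and $V_{ref}$ are consistent, $S$ and $E_{\matP}$ are small, which drives this residual down and leaves room to pick $\delta$ (equivalently $\alpha<1$) so that invariance holds; I would make this bound quantitative and, if needed, state it explicitly among the hypotheses of the Proposition.

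Granting (i) and (ii), the Banach fixed-point theorem provides a unique fixed point $V_{\matP}^{\star}\in B_0$, obtained as the limit of $V^{(k+1)}=T(V^{(k)})$ from any $V^{(0)}\in B_0$ — i.e., the successive-approximation (fixed-point power-flow) scheme converges to it. Uniqueness inside $B_0$ is part of the theorem, which matches the physical requirement that only the solution near nominal voltage is retained, even though \eqref{eq:equlibrio_enV} may have further roots outside $B_0$. Finally, invoking the preceding lemma, $V_{\matP}^{\star}$ determines the complete equilibrium $x_0=(I_{\matE}(V_{\matP}^{\star}),I_c(V_{\matP}^{\star}),V_{\matP}^{\star})$, which closes the argument.
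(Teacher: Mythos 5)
Your proposal follows the same core route as the paper's proof: you define the fixed-point map $T(V_{\matP})=\Phi_{\matP}^{-1}\left(H(V_{\matP})-E_{\matP}\right)$, bound $\left\|\partial H/\partial V_{\matP}\right\|$ on $B_0$ by $\left\|S\right\|/(1-\delta)^2$, obtain the Lipschitz constant $\alpha=\left\|\Phi_{\matP}^{-1}\right\|\left\|S\right\|/(1-\delta)^2$ via the mean value theorem, and invoke the Banach fixed point theorem to get existence, uniqueness in $B_0$, and convergence of the Picard iteration. Where you go beyond the paper is in two places: you actually justify the invertibility (positive definiteness) of $\Phi_{\matP}$ from connectivity and full column rank of $1_{\matB}\otimes A_{\matP}$, which the paper merely asserts; and, more substantively, you verify the self-mapping condition $T(B_0)\subseteq B_0$, which the paper silently assumes by simply declaring $T:B_0\rightarrow B_0$. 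You are right that $\alpha<1$ alone only gives the contraction estimate, and that Banach on the complete metric space $B_0$ additionally needs invariance (or, equivalently, the first-iterate bound $\left\|T(\mathbf 1)-\mathbf 1\right\|\leq(1-\alpha)\delta$ that you propose); as stated, the proposition's hypotheses do not guarantee this, so your explicit residual condition is a genuine strengthening of the argument rather than a detour, and it is the standard way to close this gap. In short: same method, but your version is the more complete one; the only caveat is that your added condition should indeed be stated as an extra hypothesis (or checked numerically at the flat profile), since it does not follow from $\alpha<1$ and $\delta>0$ alone.
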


\begin{proof} Define a map $T:B_0\rightarrow B_0$ as 
\begin{equation*}
T(V_{\mathcal{P}}) = \Phi_{\mathcal{P}}^{-1} (H(V_{\mathcal{P}})-E_\mathcal{P})
\end{equation*}
notice that $\Phi_{\mathcal{P}}$ is positive definite and therefore this map is well defined for all $V_{\mathcal{P}}\neq 0$. Evidently, (\ref{eq:equlibrio_enV}) is equivalent to  $  V_{\mathcal{P}} = T(V_{\mathcal{P}}) $ which defines a fixed point. Now consider two points $V_{\mathcal{P}},U_{\mathcal{P}}\in B_0$, then we have that
\begin{equation*}
    \left\| T(V_{\mathcal{P}}) - T(U_{\mathcal{P}})\right\| \leq \left\| \Phi_\matP^{-1} \right\| \left\| (H(V_{\mathcal{P}})-H(U_{\mathcal{P}}))\right\|
\end{equation*}
where $\left\|\cdot\right\|$ is any submultiplicative norm; now,
by using the mean value theorem we can conclude that
\begin{equation*}
    \left\| (H(V_{\mathcal{P}})-H(U_{\mathcal{P}}))\right\| \leq \xi \left\| V_{\mathcal{P}} - U_{\mathcal{P}} \right\|
\end{equation*}
with
\begin{equation}
    \xi =  \underset{V_\mathcal{P}\in B_0}{sup} \left\| \frac{\partial H}{\partial V_\mathcal{P}}\right\|
\end{equation}
in this case, we have that
\begin{equation*}
    \xi = \frac{\left\| S\right\|}{(1-\delta)^2}
\end{equation*}
Therefore
\begin{equation*}
    \left\| T(V_{\mathcal{P}})-T(V_{\mathcal{P}})\right\| \leq \alpha \left\|V_{\mathcal{P}}-U_{\mathcal{P}} \right\|
\end{equation*}
with 
\begin{equation*}
    \alpha = \frac{\left\| \Phi_\matP^{-1}\right\|\left\| S\right\|}{(1-\delta)^2}
\end{equation*}

By using the Banach fixed point theorem \cite{sholomo} we conclude that if $\alpha<1$, then $T$ is a contraction map and there exists a unique fixed point in $B_0$ which can be easily calculated by the method of successive approximations.
\end{proof}
The successive approximation is just the use of a Picard iteration starting from any point $V\in B_0$ (in this case, $v_k=1$ pu as usual in power flow applications):
\begin{equation*}
    V_{\mathcal{P}}^{({iter}+1)} = T\left(V_{\mathcal{P}}^{({iter})}\right)
\end{equation*}
\begin{remark}
Notice that Proposition \ref{prop:uno} does not only guarantee the existence of the equilibrium, but also its uniqueness.  In addition, it gives a numerical methodology to find it.
\end{remark}

\section{Stability analysis}\label{sec:stability}

After finding the equilibrium point our next step is to obtain stability conditions for the MT-HVDC grid by using the Lyapunov theory. In order to simplify our analysis let us define the following: 

\begin{definition}[Incremental model]  The incremental model for a MT-HVDC grid is given by
\begin{equation}
\dot{z}
=
M^{-1}\left(\Phi z+h(z)\right)
\label{eq:incrementalz}
\end{equation}

\noindent where $z = x - {x}_0$ (with $x_0$ the equilibrium point), $x^T=(I^T_{\matE}, I^T_c, V^T_{\matP})$, the operating point is $x_0^T=(I_{\matE 0}^T, I^T_{c 0}, V^T_{\matP 0})$ and $h(z)$ defined as follows 

\begin{equation}
h(z)=\left(\begin{array}{c}
0_{_{\matB\matE \times 1}}\\
H(z_{V_{\matP}})_{_{\matP \times 1}}-I_{c0}\\
0_{_{\matP \times 1}}
\end{array}
\right),
\label{eq:ecuaciongz}
\end{equation}
\end{definition}  

where, the sub-indices represent the dimension of the matrices. The function $H(z_{V_{\matP}})$ with the incremental voltage variable $z_{V_{\matP}}$ is described in \eqref{eq:deltaG1}.
\begin{equation}
\label{eq:deltaG1}
\ H(z_{V_{\matP}})=diag\left(\frac{1}{z_{V_{\matP}}+V_{\matP0}}\right)S{+K_\matP\cdot 1_{\matP}}
\end{equation} 

Let us analyze the stability of this incremental model by using Lyapunov theory.

\begin{theorem}[Lyapunov]\label{theo:Lyap}
Let $E$ be an open subset of $\mathbb{R}^n$ containing $x_0$.  Suppose that $f\in C^1(E)$ and that $f(x_0) = 0$.  Suppose further that there exists a real value function $W\in C^1(E)$ satisfying $W(x_0)=0$ and $W(x)>0$ if $x\neq x_0$.  Then a) if $\dot{W}(x)\leq 0$ for all $x \in E$, then $x_0$ is a stable equilibrium point. b) if $\dot{W}(x)< 0$ for all $x \in E - \left\{x_0\right\}$ is asymptotically stable and c) if $\dot{W}(x)>0$ for all $x\in E - \left\{x_0\right\}$ is unstable.
\end{theorem}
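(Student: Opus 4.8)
\emph{Approach.} This is the classical Lyapunov stability theorem (see, e.g., \cite{sastry,khalil}), so the plan is to carry out the standard proof, reading $W$ for the Lyapunov function and $f$ for the right-hand side of \eqref{eq:incrementalz}. Throughout I fix a closed ball $\overline{B}_\varepsilon = \overline{B}_\varepsilon(x_0)\subset E$ and use that $f\in C^1(E)$ gives local existence and uniqueness of solutions, together with the standard fact that a solution which remains in a compact subset of $E$ is defined for all $t\ge 0$. \textbf{Part (a).} Given $\varepsilon>0$ with $\overline{B}_\varepsilon\subset E$, let $m=\min\{W(x):\|x-x_0\|=\varepsilon\}$; this exists and is strictly positive since $W$ is continuous and positive definite on $E\setminus\{x_0\}$. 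By continuity of $W$ and $W(x_0)=0$, pick $\delta\in(0,\varepsilon)$ with $W(x)<m$ for $\|x-x_0\|<\delta$. For a trajectory with $x(0)\in B_\delta$, $\dot W\le 0$ on $E$ gives $\tfrac{d}{dt}W(x(t))\le 0$, hence $W(x(t))\le W(x(0))<m$; the trajectory therefore never meets the sphere $\|x-x_0\|=\varepsilon$, stays in $\overline{B}_\varepsilon$, exists for all $t\ge 0$, and this is exactly stability.

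\textbf{Part (b).} Take $\delta$ as in (a). Along a trajectory starting in $B_\delta$, $t\mapsto W(x(t))$ is nonincreasing and $\ge 0$, so it converges to some $c\ge 0$; I claim $c=0$. If $c>0$, continuity and positive definiteness of $W$ give an $r>0$ with $W(x)<c$ on $B_r$, so the trajectory stays in the compact set $K=\{x:r\le\|x-x_0\|\le\varepsilon\}$. On $K$ the continuous function $\dot W$ is strictly negative, so it attains a maximum $-k<0$; then $W(x(t))\le W(x(0))-kt\to-\infty$, contradicting $W\ge 0$. Hence $W(x(t))\to 0$, and positive definiteness of $W$ forces $x(t)\to x_0$, which together with (a) yields asymptotic stability.

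\textbf{Part (c).} Here I use a Chetaev-type argument. Since $\dot W>0$ on $E\setminus\{x_0\}$ but $W(x_0)=0$, every neighborhood of $x_0$ contains points with $W>0$; choose $z$ with $x_0+z\in B_\varepsilon$ and $a:=W(x_0+z)>0$. Along the trajectory from $x_0+z$, $W$ is increasing, so $W(x(t))\ge a$ while the solution exists; by positive definiteness of $W$ this confines the trajectory to $K_a=\overline{B}_\varepsilon\cap\{W\ge a\}$, a compact set bounded away from $x_0$. On $K_a$, $\dot W$ has a positive minimum $k>0$, so $W(x(t))\ge a+kt$ is unbounded; but $W$ is bounded on the compact ball $\overline{B}_\varepsilon$, so the trajectory must leave $\overline{B}_\varepsilon$. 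As $x_0+z$ can be chosen arbitrarily close to $x_0$, $x_0$ is unstable.

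\textbf{Main obstacle.} The estimates are routine; the only care needed is (i) guaranteeing forward-completeness of solutions by trapping them in compact subsets of $E$ (so that no global-existence hypothesis is required), and (ii) in parts (b) and (c), checking that the relevant sub-/super-level sets intersected with $\overline{B}_\varepsilon$ are compact and stay a positive distance from $x_0$ — this is precisely where continuity and positive definiteness of $W$ enter essentially. For the application to \eqref{eq:incrementalz} one should also observe that $f(z)=M^{-1}(\Phi z+h(z))$ is only $C^1$ where $z_{V_{\mathcal P}}+V_{\mathcal P 0}\neq 0$, so $E$ must be taken inside that region.
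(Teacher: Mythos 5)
Your proof is correct; the paper does not prove this theorem itself but simply cites \cite{perko}, and your argument is exactly the standard textbook proof that such references give — the compactness/minimum-on-the-sphere argument for (a), the monotone-limit contradiction for (b), and a Chetaev-type escape argument for (c), with the forward-completeness issue handled properly by trapping trajectories in compact subsets of $E$. Your closing observation that, for the application to \eqref{eq:incrementalz}, $E$ must avoid the set where $z_{V_{\mathcal{P}}}+V_{\mathcal{P}0}$ has a zero component is also apt and consistent with the paper's restriction to a ball around the operating voltage.
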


\begin{proof}
See \cite{perko}
\end{proof}

The function $W$ in Theorem \ref{theo:Lyap} is called Lyapunov function.  There is not general method to obtain these type of functions.  Here, we find Lyapunov function candidate with the Krasovskii's method \cite{khalil} in which $W(z)=\dot{z}^T Q(z) \dot{z}$.

\begin{lemma}\label{lemaKrasovskii}

Consider the system $\dot{z}=f(z)$ with f(0)=0. Assume that $f(z)$ is continuously differentiable and its Jacobian $[\partial f/\partial z]$, the Lyapunov function can be differentiated to give\\
\[
\dot{W}(z)=\dot{z}^T\left\{\left(\frac{\partial f}{\partial z}\right)^TQ+Q\left(\frac{\partial f}{\partial z}\right)\right\}\dot{z}=\dot{z}^T\Psi\dot{z}.
\] 
If $Q$ is a constant symmetric, positive definite matrix and $\dot{W}(z)\leq 0$, then the zero solution $z\equiv 0$ is a unique asymptotically stable equilibrium with Lyapunov function $W(z)$.\\
\end{lemma}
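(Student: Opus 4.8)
The plan is to verify that $W(z)=\dot z^T Q\dot z = f(z)^T Q f(z)$ satisfies every hypothesis of Theorem~\ref{theo:Lyap} and then to invoke part (b) of that theorem. Concretely I must (i) derive the stated quadratic form for $\dot W$ along solutions; (ii) show that $W$ is $C^1$ on $E$, that $W(0)=0$, and that $W(z)>0$ for $z\in E-\{0\}$; and (iii) show that $\dot W(z)<0$ on $E-\{0\}$. I will read the hypothesis ``$\dot W(z)\le 0$'' in the strict Krasovskii form $\Psi\prec 0$ on $E$, which is the condition actually needed for asymptotic stability and is precisely what the LMI feasibility test used later in the paper certifies.

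First I would differentiate $W$ along trajectories. Since $f\in C^1$, the composite $W=f^TQf$ is $C^1$, and writing $\dot z=f(z)$ the chain rule gives $\frac{d}{dt}f(z)=(\partial f/\partial z)\dot z$, hence
\[
\dot W(z)=\dot f^TQf+f^TQ\dot f=\dot z^T\left(\frac{\partial f}{\partial z}\right)^{\!T}\!Q\,\dot z+\dot z^TQ\left(\frac{\partial f}{\partial z}\right)\dot z,
\]
and, using $Q=Q^T$, this collapses to $\dot W(z)=\dot z^T\{(\partial f/\partial z)^TQ+Q(\partial f/\partial z)\}\dot z=\dot z^T\Psi\dot z$, the claimed identity. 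This step is routine.

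The delicate point is the positive definiteness of $W$. Because $Q\succ 0$ we have $W(z)=f(z)^TQf(z)\ge 0$, with equality exactly when $f(z)=0$, so I must rule out zeros of $f$ other than $z=0$ on $E$. Strict negativity of $\Psi$ forces $\partial f/\partial z$ to be nonsingular at every point of $E$: if $(\partial f/\partial z)v=0$ for some $v\neq 0$, then $v^T\Psi v=2\,v^TQ(\partial f/\partial z)v=0$, contradicting $\Psi\prec 0$. A mean-value argument then yields injectivity of $f$ on a convex neighbourhood of the origin: if $f(a)=f(b)$ with $a\neq b$, the scalar map $s\mapsto (a-b)^TQf(b+s(a-b))$ takes the same value at $s=0$ and $s=1$, so by Rolle's theorem its derivative $(a-b)^TQ(\partial f/\partial z)(a-b)=\tfrac12(a-b)^T\Psi(a-b)$ vanishes somewhere, again contradicting $\Psi\prec 0$. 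Hence $f(z)=0$ only at $z=0$, so $W(0)=0$ and $W(z)>0$ for $z\neq 0$; the same injectivity shows the origin is the unique equilibrium of $\dot z=f(z)$ on $E$.

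Finally I would close the argument: for $z\in E-\{0\}$ the previous step gives $\dot z=f(z)\neq 0$, whence $\dot W(z)=\dot z^T\Psi\dot z<0$. Thus $W$ meets all the requirements of Theorem~\ref{theo:Lyap}(b) on $E$, so the zero solution is asymptotically stable with Lyapunov function $W$, and it is the unique equilibrium. I expect the second step to be the main obstacle: the chain-rule identity and the final sign check are immediate, but proving that $W$ is positive definite — equivalently, that $f$ has no spurious zeros — requires the nonsingular-Jacobian argument together with the mean-value/injectivity estimate, and it is here that the strict form of the Krasovskii inequality is essential.
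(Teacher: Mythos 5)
Your argument is correct, but it is worth noting that the paper does not actually prove this lemma: its ``proof'' is only a citation to the textbook of Slotine and Li, where Krasovskii's method is established. What you have done is reconstruct that argument in a self-contained way: the chain-rule identity $\dot{W}=\dot{z}^{T}\Psi\dot{z}$, the observation that $\Psi\prec 0$ forces the Jacobian $\partial f/\partial z$ to be nonsingular, the Rolle/mean-value step (on a convex neighbourhood, which you correctly insist on) showing that $f$ is injective and hence vanishes only at the origin, so that $W=f^{T}Qf$ is positive definite, and finally an appeal to Theorem~\ref{theo:Lyap}(b). Compared with the paper's pointer to the literature, your route gives the reader an actual proof and makes explicit where each hypothesis is used: convexity of the neighbourhood for injectivity, and strictness of the matrix inequality both for the nonsingularity of the Jacobian and for $\dot{W}<0$ away from the origin.

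One caveat you should state more prominently: you replace the printed hypothesis $\dot{W}(z)\le 0$ by the strict condition $\Psi\prec 0$. That is the right repair --- as literally written the lemma is false (take $f\equiv 0$: then $\dot{W}\le 0$ holds, yet there is neither asymptotic stability nor uniqueness of the equilibrium), and the strict form is what the cited theorem actually assumes --- but it is not what the paper writes, nor quite what the paper verifies later: the LMI \eqref{eq:LMIQ} only imposes $\preceq 0$, and the subsequent stability lemma compensates by additionally requiring the Jacobian to be Hurwitz and invoking LaSalle's invariance principle. Your proof therefore covers the strict case cleanly; to cover the semidefinite case that is actually used downstream you would need exactly such a LaSalle-type argument, since with $\Psi\preceq 0$ both your Jacobian-nonsingularity step and the strict negativity of $\dot{W}$ can fail.
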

\begin{proof}
See \cite{slotine}
\end{proof}

\subsection{Stability for the MT-HVDC system}

\begin{lemma}
The equilibrium of the MT-HVDC network described by \eqref{eq:ie2}, \eqref{eq:ic2} and \eqref{eq:vp2} is asymptotically stable if there exist a matrix $\Psi\preceq 0$, where $\preceq$ represents the L\"{o}wner partial order on negative semidefinite matrices.  

\end{lemma}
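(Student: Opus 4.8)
The plan is to apply the Krasovskii construction of Lemma~\ref{lemaKrasovskii} to the incremental model \eqref{eq:incrementalz}. First I would write that system as $\dot z=f(z)$ with $f(z)=M^{-1}\bigl(\Phi z+h(z)\bigr)$ and check the hypotheses of Lemma~\ref{lemaKrasovskii}. The matrix $M$ is block diagonal with positive-definite blocks $L$, $\tau$, $C$, so it is invertible and $f$ is well defined; $f(0)=0$ because, from \eqref{eq:deltaG1}, $H(0)=diag(V_{\matP0})^{-1}S+K_\matP 1_\matP=I_{c0}$ by Lemma~\ref{lemma:1}; and $f\in C^1$ on a neighbourhood of the origin on which the entries $z_{V_\matP,k}+V_{\matP0,k}$ stay bounded away from $0$ (precisely the ball $B_0$ of Proposition~\ref{prop:uno}), since $h$ is the only nonlinearity and it is smooth there.

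Next I would compute the Jacobian. Since, from \eqref{eq:ecuaciongz}, $h(z)$ involves only $H(z_{V_\matP})-I_{c0}$ in its middle block, one obtains $\partial f/\partial z=M^{-1}\bigl(\Phi+D(z)\bigr)$, where $D(z)$ is the matrix whose only nonzero block, in the block row of $I_c$ and the block column of $V_\matP$, is
\begin{equation*}
\frac{\partial H}{\partial z_{V_\matP}}=-\,diag\!\left(\frac{S_k}{(z_{V_\matP,k}+V_{\matP0,k})^2}\right).
\end{equation*}
The Krasovskii candidate $W(z)=\dot z^TQ\dot z$ with $Q$ constant, symmetric and positive definite then gives, exactly as in Lemma~\ref{lemaKrasovskii},
\begin{equation*}
\dot W(z)=\dot z^T\Bigl\{\bigl(M^{-1}(\Phi+D(z))\bigr)^TQ+Q\,M^{-1}(\Phi+D(z))\Bigr\}\dot z=\dot z^T\Psi(z)\,\dot z .
\end{equation*}
If $Q$ can be chosen so that $\Psi(z)\preceq 0$ on $B_0$, then $\dot W\le 0$ there; combined with $W(0)=0$ and $W(z)>0$ for $z\neq0$ (which holds because $f(z)=0$ would force $\Phi z+h(z)=0$, i.e. a second equilibrium, contradicting the uniqueness in Proposition~\ref{prop:uno}), Lemma~\ref{lemaKrasovskii}---equivalently Theorem~\ref{theo:Lyap}(b)---yields asymptotic stability of $z\equiv0$, hence of the operating point $x_0$ of \eqref{eq:ie2}--\eqref{eq:vp2}.

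The main obstacle is that $\Psi$ is not a single fixed matrix: it depends on the state through $D(z)$, so the hypothesis ``there exists $\Psi\preceq0$'' must be read as ``there exists $Q\succ0$ making $\Psi(z)\preceq0$ for all $z\in B_0$.'' I would dispose of the state dependence by noting that each diagonal entry $-S_k/(z_{V_\matP,k}+V_{\matP0,k})^2$ ranges over a compact interval as $z_{V_\matP,k}$ varies over $B_0$, so $D(z)$ lies in a box (polytope) with $2^{|\matP|}$ vertices $D_1,\dots,D_{2^{|\matP|}}$; since $\Psi(z)$ is affine in $D$, $\Psi(z)\preceq0$ for every admissible $z$ is equivalent to the finite family of linear matrix inequalities $(M^{-1}(\Phi+D_j))^TQ+Q\,M^{-1}(\Phi+D_j)\preceq0$ together with $Q\succ0$. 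Feasibility of this LMI is precisely the condition assumed, and its solution simultaneously produces the explicit Lyapunov function $W(z)=\dot z^TQ\dot z$. What remains is routine: confirming the block form of $D(z)$, checking the vertex enumeration, and invoking Lemma~\ref{lemaKrasovskii}.
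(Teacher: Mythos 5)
Your proposal is correct and follows the same overall route as the paper: shift to the incremental model, compute the Jacobian $\partial f/\partial z=M^{-1}(\Phi+\partial h/\partial z)$ with the diagonal droop block \eqref{eq:diagdG}, take the Krasovskii candidate $W(z)=\dot z^TQ\dot z$ with constant $Q\succ0$, and reduce the state-dependent condition $\Psi(z)\preceq0$ to a finite family of LMIs. The one genuine difference is how that finite reduction is justified. The paper imposes the LMI \eqref{eq:LMIQ} at $r$ selected boundary points of the ball $B_0$ (in the computations, the $2|\matP|$ intersections of the axes with the ball) and does not argue why negativity at those points extends to every $z\in B_0$; since the convex hull of the axis points is the cross-polytope, not the full range of the diagonal block, that step is left implicit. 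You instead over-approximate $\partial H/\partial z_{V_\matP}$ by an interval box, enumerate its $2^{|\matP|}$ vertices, and use the fact that $(M^{-1}(\Phi+D))^TQ+QM^{-1}(\Phi+D)$ is affine in $D$, so vertex feasibility plus convexity gives $\Psi(z)\preceq0$ on all of $B_0$. This buys rigor (your finite check provably covers the whole ball) at the price of exponentially many LMIs, $2^{|\matP|}$ versus the paper's $2|\matP|$. A further small divergence: you establish positive definiteness of $W$ by arguing $f(z)\neq0$ for $z\neq0$ from the uniqueness of the equilibrium in Proposition \ref{prop:uno} (minding that the shifted ball around $V_{\matP0}$ must sit inside the ball of that proposition), whereas the paper closes the argument by invoking LaSalle's principle; both devices serve the same purpose of excluding spurious rest points, and either is acceptable here.
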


\begin{proof}

In order to apply the Lyapunov theorem with the Krasovskii's method to study the stability of the system, the operating point is shifted to the origin as above.  The model is the same for \eqref{eq:incrementalz} with $M$, $\Phi$ and $h(z)$ as in \eqref{eq:modelo_enx} and \eqref{eq:ecuaciongz}.

The Jacobian for the system is given by 

\begin{equation*}
\partial f/\partial z = M^{-1}(\Phi + \partial h(z)/\partial z)
\end{equation*}
that is 
\begin{equation}
\frac{\partial f}{\partial z}=M^{-1}
\left(
\begin{array}{ccc}
-R_{_{\matB\matE \times \matB\matE}}& 0_{_{\matB\matE \times \matP}}& \kappa _{_{\matB\matE \times \matP}}\\
0_{_{\matP \times \matB\matE}}&-\matI_{_{\matP \times \matP}}&
\left(\frac{\partial H(z_{V_{\matP}})}{\partial z_{V_{\matP}}}\right)_{_{\matP \times \matP}}\\
-\kappa^T_{_{\matP \times \matB\matE}}& \matI_{_{\matP \times \matP}}&
-G_{c_{\matP \times \matP}}
\end{array}
\right),
\end{equation} 
Where, the Jacobian matrix $\partial H(z_{V_{\matP}})/\partial z_{V_{\matP}}$ is a diagonal matrix described in \eqref{eq:diagdG};
\begin{equation}
\label{eq:diagdG}
\frac{\partial H(z_{V_{\matP}})}{\partial z_{V_{\matP}}}=diag\left(-\frac{1}{(z_{V_{\matP}}+V_{\matP 0})^2}\right)\cdot diag(S),
\end{equation}
and $\kappa=(1_{\matB}\otimes A_{\matP})$. The constant, symmetric, positive definite matrix $Q \in \matRe ^{(\matB\matE+2\matP)\times(\matB\matE+2\matP)}$ can be calculated by solving the following linear matrix inequality (LMI):
\begin{eqnarray}
\label{eq:LMIQ}
\left\{
\begin{array}{l}
Q=Q^T\succ 0\\
\left(\frac{\partial f}{\partial z}\right)_{j}^TQ+Q\left(\frac{\partial f}{\partial z}\right)_{j}\preceq 0, \forall j \in \{1,2,..r\}.
\end{array}
\right. 
\end{eqnarray}
where the subindex $j=1,2,...,r$ is a set of boundary points of the ball $B_0=\{z_{v_{\matP}}\in \matRe^{\matP}: z_{v_{\matP}}\leq \delta_a\}$ (evidently $z=0$ is inside of this set). 
The linear matrix inequality is feasible if and only if the Jacobian $\left(\partial f/\partial z\right)_j$ is Hurtwitz. Moreover, the asymptotic stability of the system is proved if the LMI \eqref{eq:LMIQ} is feasible. Then, the matrix $Q$ can be calculated and the Lyapunov function is $W(z)=\dot{z}^T Q(z) \dot{z}$. Finally, we invoke \textit{LaSalle's} principle \cite{sastry}. Therefore, $\dot{W}(0)$ is zero \textit{if} $\dot{z}=0$ and it implies $z=0$ from \eqref{eq:incrementalz}.
\end{proof}


\color{black}

\section{Computational results}\label{sec:computa}
The MT-HVDC grid shown in Fig. \ref{fig:MTDCexample} is used to corroborate the analysis presented in the sections above. The cable model is represented with three edges for the series impedance in order to take into account the frequency dependence. Two offshore wind farms and two inland stations were considered. The reference voltages are 1.0 pu. The cable distances as well as the generated and consumed power are described in Fig. \ref{fig:MTDCexample}. The system is represented in per unit with a $P_{base}=400$ MW, $V_{base (dc)}=400$ kV. Additionally, the parameters of the network and the MMCs are described in the appendix. 
For this case, the results of applying Proposition 1 are as follows:
\begin{align}
    \nonumber
     \alpha= 0.0068<1\\
     \nonumber
     \delta= 0.5>0
\end{align}

\begin{figure}[tb]
\footnotesize
\centering
\begin{tikzpicture}[x=1mm,y=1mm]
		\node(T1) at (-5,20) {};
		\draw[rotate = -15, fill=gray!20](T1)+(5,0) ellipse (5 and 0.5);
		\draw[rotate = 105, fill=gray!20](T1)+(5,0) ellipse (5 and 0.5);
		\draw[rotate = 225, fill=gray!20](T1)+(5,0) ellipse (5 and 0.5);
		\draw[-, fill=gray!20] (T1)+(-0.5,0) -- +(-0.3,0) -- +(-0.6,-22) -- +(0.6,-22) -- +(0.3,0);
		\draw[-, fill=gray!20] (T1) circle (1);				
		\node(T2) at (65,20) {};
		\draw[rotate = -15, fill=gray!20](T2)+(5,0) ellipse (5 and 0.5);
		\draw[rotate = 105, fill=gray!20](T2)+(5,0) ellipse (5 and 0.5);
		\draw[rotate = 225, fill=gray!20](T2)+(5,0) ellipse (5 and 0.5);
		\draw[-, fill=gray!20] (T2)+(-0.5,0) -- +(-0.3,0) -- +(-0.6,-22) -- +(0.6,-22) -- +(0.3,0);
		\draw[-, fill=gray!20] (T2) circle (1);		
\draw[thick, blue!50!green] (-7,50) circle (5);	
\draw[thick, blue!50!green] (69,50) circle (5);		
\node at (-7,50)[text width = 20, blue!50!green] {Main Grid};
\node at (69,50)[text width = 20, blue!50!green] {Main Grid};
\draw[thick] (-5,0) --+(15,0);
\draw[thick] (65,0) --+(-15,0);
\draw[thick] (-2,50) --+(12,0);
\draw[thick] (64,50) --+(-14,0);

\draw[] (10,-3) -- (50,-3);		
\draw[] (10,3)  -- +(5,0) -- (15,48) -- (10,48);		
\draw[] (50,50)  -- +(-10,0) -- (20,0) -- (10,0);	
\draw[] (50,3)  -- +(-5,0) -- (45,48) -- (50,48);		
\draw[] (10,52) -- (50,52);		
\draw[thick,blue!50!green, fill] (0,-3) rectangle +(6,6);
\draw[white] (0,-3) -- +(6,6);
\node[white] at (1.5,1) {$\approx$};
\node[white] at (4,-1) {$=$};

\draw[thick,blue!50!green, fill] (55,-3) rectangle +(6,6);
\draw[white] (55,-3) -- +(6,6);
\node[white] at (56.5,1) {$=$};
\node[white] at (59,-1) {$\approx$};

\draw[thick,blue!50!green, fill] (0,47) rectangle +(6,6);
\draw[white] (0,47) -- +(6,6);
\node[white] at (1.5,51) {$\approx$};
\node[white] at (4,49) {$=$};

\draw[thick,blue!50!green, fill] (55,47) rectangle +(6,6);
\draw[white] (55,47) -- +(6,6);
\node[white] at (56.5,51) {$=$};
\node[white] at (59,49) {$\approx$};

\draw[very thick,blue!50!green] (10,-5) -- +(0,10) node[above] {2};
\draw[very thick,blue!50!green] (50,-5) -- +(0,10) node[above] {1};
\draw[very thick,blue!50!green] (10,45) -- +(0,10) node[above] {4};
\draw[very thick,blue!50!green] (50,45) -- +(0,10) node[above] {3};

\node at (25,40) {$\mathcal{P} = \left\{ 1,2,4\right\}$};
\node at (25,35) {$\mathcal{V} = \left\{ 3\right\}$};
\node at (30,0) {100 km};
\node at (30,54) {100 km};
\node [rotate=90] at (12,25) {200 km};
\node [rotate=90] at (49,25) {200 km};
\node [rotate=68] at (34,25) {150 km};
\node at (0,-6) {$P_{2(ref)} = 0.45$};
\node at (60,-6) {$P_{1(ref)} = 0.4$};
\node at (0,43) {$P_{4(ref)} = -0.35$};

\end{tikzpicture}
\caption{Four nodes MT-HVDC grid with two offshore wind farms.}
\label{fig:MTDCexample}
\end{figure}
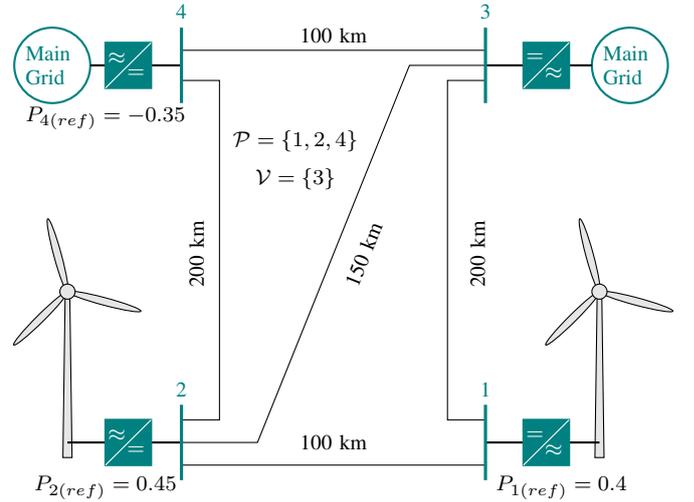
The successive approximations method was evaluated in this system by executing only four iterations. Results are shown in Fig. \ref{fig:superconvergence} were convergence is evident with a logarithmic scale at the y$-$axis. The voltage of the power nodes is $V_{\matP}=(V_{1},V_{2},V_{4})^T= (1.0014,1.00010,0.9998)$ pu and $V_{\matV}=1.0000$ pu.\\

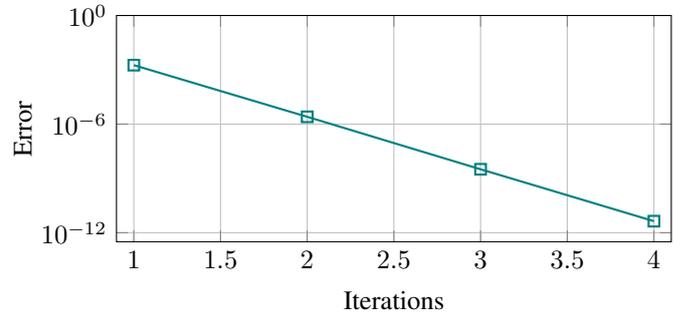
\begin{figure}[tb]
\begin{tikzpicture}
\begin{axis}[scale only axis,width=7.3cm, height=3.0cm,ymode=log, xmajorgrids,ymajorgrids, yminorgrids,minor y tick num=5,
xlabel={Iterations}, ylabel ={Error}, ymax = 1, ymin=0, xmin=0.9, xmax = 4.1]
\addplot [thick, blue!50!green, mark = square]
coordinates{(1,0.0018)(2,2.527E-6)(3,3.1955E-9) (4,4.3337E-12)};
\end{axis}
\end{tikzpicture}
\caption{Convergence of the successive approximations method applied to the MT-HVDC grid from an initial point $V_\mathcal{P}=(1.00,1.00,1.00)^T$}
\label{fig:superconvergence}
\end{figure}

In order to study the stability of the MT-HVDC network, we apply the analysis described in section \ref{sec:stability}, with the evaluation of $\left(\partial f/\partial z\right)_j$ at $2\matP=6$ points of the ball with radius $\delta_a=\delta$. The points are located at the intersections between the axes and the ball as defined above. First, the eigenvalues of $(\partial f/\partial z)_{j}$ have negative real part. The LMI of \eqref{eq:LMIQ} is evaluated with a simple procedure in existing software in MATLAB \cite{lmimatlab} or \cite{cvx}. Once, the LMI problem is feasible, the matrix $Q$ is obtained and $\Psi$ is presented for $z=0$. Figures \ref{fig:Qmatrix} and \ref{fig:Psimatrix} show the matrices values and distribution by a color map representation. It is observed from the color-map that the matrices $Q$ and $\Psi$ are symmetric. All the eigenvalues of $Q$ are greater than zero and all the eigenvalues of $\Psi$ are real and lower than zero.
The eigenvalues of $Q$ and $\Psi$ are $\lambda_{Q}, \lambda_{\Psi} \in \matRe^{(\matB\matE+2\matP)\times(\matB\matE+2\matP)}$, respectively. Therefore, for the operating point presented in the results above: 
$\lambda_{Q}=$\begin{small}
$10^{-4}\times(5,  7,    10,    29,    29,   60,    85,
 92,    313,    313,    1449,   2115,    3443,    3982,$\\
$ 4355,    4870,    6493,   9474,    32646,    40632,    54295)$. 
\end{small}\\
The matrix $\Psi$ has  the following eigenvalues: 
$-\lambda_{\Psi}=$
\begin{small}$(0.6678,
0.8096, 4.6681, 1.5217, 4.2254, 3.7405, 3.2104, 3.2122,$\\
$ 3.2420, 3.2417,3.0355, 2.9865,  2.9131,  2.9169,   2.9255,  2.9563,$\\
$2.9562,2.9481,   2.9437,   2.9386,   2.9410 )$.
\end{small}

\begin{figure}[tb]
    \centering
    \includegraphics[scale=0.8]{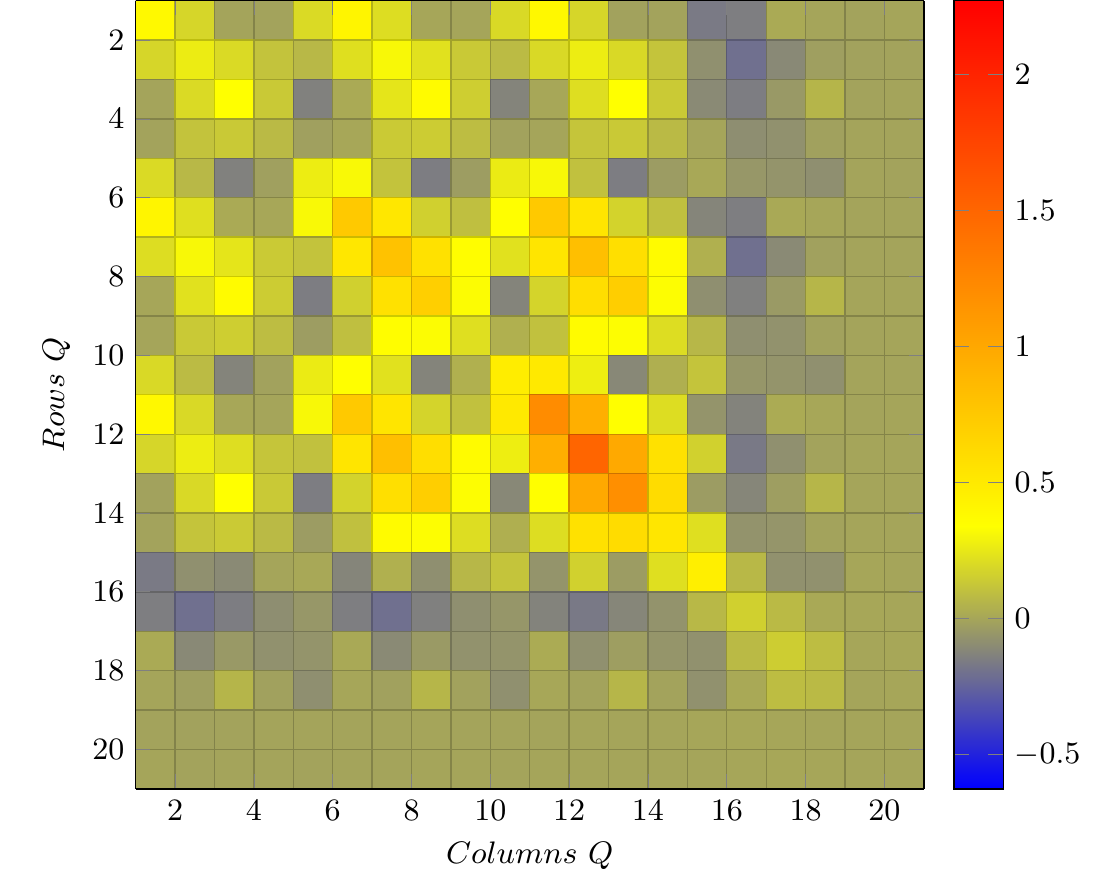}
    \caption{Color-map representation of the matrix Q obtained by the results of the LMI problem.}
    \label{fig:Qmatrix}
\end{figure}
\begin{figure}[tb]
    \centering
    \includegraphics[scale=0.8]{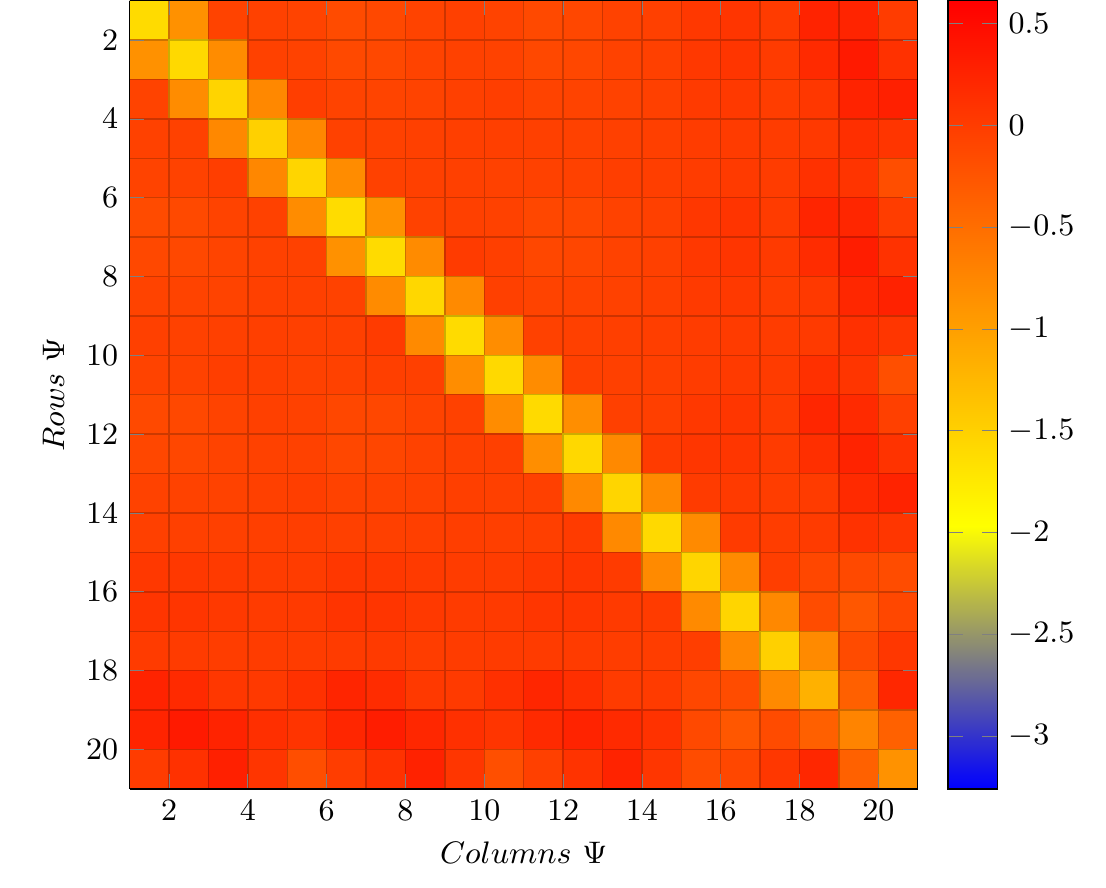}
        \caption{Color-map representation of the matrix $\Psi$ obtained by the results of the LMI problem.}
    \label{fig:Psimatrix}
\end{figure}

The time response results for the four nodes MT-HVDC network are shown in Fig. \ref{fig:3nodesresponse}. The coordination from the black-start condition of the grid uses the closing of the lines at the following instants: initially the line 1-3 and line 1-2 are connected. The line 3-2 is connected at 5.3 s, the next line to be connected is line 2-4 at 6.5 s, finally the line 3-4 is connected at 7 s. The power references for the converters are activated at 4.5 s for converter at node 1, at 5.5 s for the converter at node 2 and finally at 7.7 for the converter at node 3. After the 7.7 s the MT-HVDC grid reaches the desired operating point.

\begin{figure}
    \centering
    \includegraphics[scale=0.8]{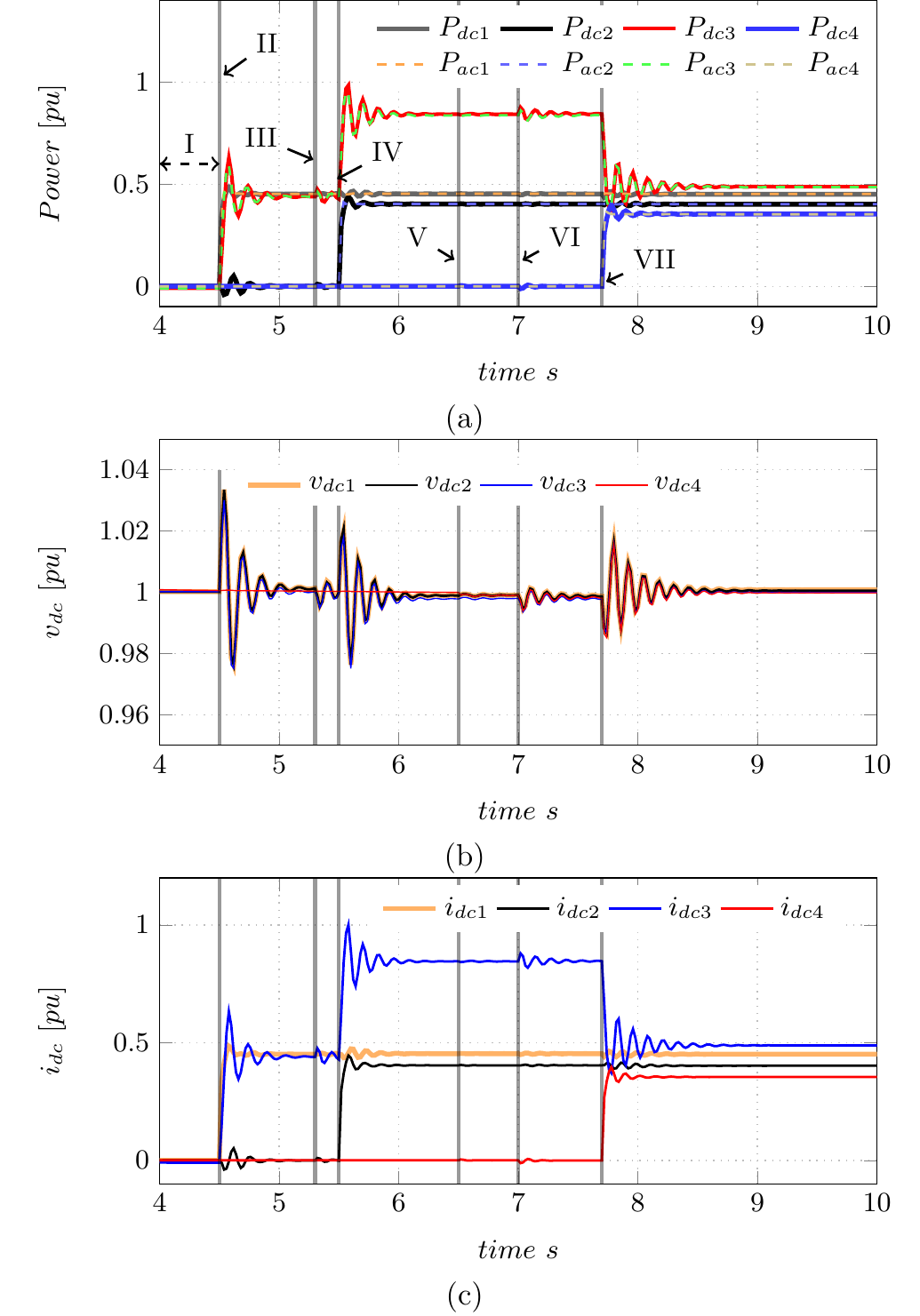}
    \caption{MT-HVDC grid time response with four nodes}
    \label{fig:3nodesresponse}
\end{figure}

\section{Conclusions}\label{se:conclu}
The contribution of the paper folds in three parts, first a generalized model of the MT-HVDC has been described by the use of graph theory and the application of hyper-edges to describe the frequency dependence of the cable model. Therefore, the convergence of the nonlinear algebraic system of a MT-HVDC that describes the operating point and defines the existence of this point has been studied. Moreover, conditions for the existence of the solution of the model of the MT-HVDC have been listed in this paper. The third part of the contribution of the paper is the application of the Krasovskii's method to obtain a Lyapunov function and we list conditions for the stability of the system with an approximated model for the converters as is typically used. Finally, the simulation results present the behaviour of the power flow and transient in a standard network.

\vspace{-0.5cm}
\section{Appendix}
\subsection{Parameters of the dc cable}
The cable model in the network is based on \cite{Freytes2016b}, the parameters are listed in Table \ref{tab:cableparameters}. The parameters $R_{ji}$ and $L_{ji}$ define the cable resistance and inductance at the $j-$th branch, $G_{l}$ and $C_{l}$ are the conductance and capacitance in the cable, respectively.  

\begin{table}
\caption{Parameters of the dc cable.}
\centering
\begin{tabular}{cccc}
Parameter&value&Paramter&value\\
\hline
\ &\ &\ &\ \\
$R_{11}$&1.1724$\cdot 10^{-1}$ [$\Omega$/km]&$L_{11}$&2.2851$\cdot 10^{-4}$ [H/km]\\
$R_{12}$&8.2072$\cdot 10^{-2}$ [$\Omega$/km]&$L_{12}$&1.5522$\cdot 10^{-3}$ [H/km]\\
$R_{13}$&1.1946$\cdot 10^{-2}$ [$\Omega$/km]&$L_{13}$&3.2942$\cdot 10^{-3}$ [H/km]\\
$G_{l}$&7.6330$\cdot 10^{-11}$ [S/km] &$C_{l}$&0.1983$\cdot 10^{-6}$ [F/km]\\
\hline
\end{tabular}
\label{tab:cableparameters}
\end{table}

\subsection{Parameters for the MMC converter}
The parameters of the converters are based on the CIGRE guide \cite{cigrehvdc}. The rated power for each converter is 400 MVA, the system uses as base voltage $220$ kV, the DC base voltage is $400$ kV. the parameters are listed in Table \ref{tab:mmcpar}. The converter transformer is simulated with $L_f$ and $R_f$ as shown in Fig. \ref{fig:mmc}. The inductance of the transformer is $L_f=0.18$ pu, the resistance is $R_{f}=0.6/100$ pu.
The controller parameters are calculated with the procedure presented in \cite{MMCPItune} with the pole placement technique. The parameter $\tau$ for the converter is calculated with the closed loop step response circulating current control. Hence, $\tau\approx 1.2$ ms.

\begin{table}
\caption{Parameters of the MMC.}
\centering
\begin{tabular}{cccccc}
Parameter&value&Parameter&value&Parameter&value\\
\hline
$R_{a}$&0.003 pu &
$L_a$&0.15 pu&
$n$&200 \\
$C_{i}$&5 mF &
$C_{pole}$&60 ms &&\\
\hline
\end{tabular}
\label{tab:mmcpar}
\end{table}
\vspace{-0.3cm}

\bibliographystyle{IEEEtran}
\bibliography{biblio}

\end{document}